\newtheorem{myDef}{Definition}
\newtheorem{Proposition}{Proposition}
\definecolor{cvprblue}{rgb}{0.21,0.49,0.74}
\title{Differentiable Information Bottleneck for Deterministic Multi-view Clustering}
\author{Xiaoqiang Yan, Zhixiang Jin$^*$, Fengshou Han, Yangdong Ye$^*$\\
School of Computer and Artificial Intelligence, Zhengzhou University\\
{\tt\small \{iexqyan, ieydye\}@zzu.edu.cn, \{zxjin, iefshan\}@gs.zzu.edu.cn}
}
\begin{document}
\maketitle
\begin{abstract}
	In recent several years, the information bottleneck (IB) principle provides an information-theoretic framework for deep multi-view clustering (MVC) by compressing multi-view observations while preserving the relevant information of multiple views. Although existing IB-based deep MVC methods have achieved huge success, they rely on variational approximation and distribution assumption to estimate the lower bound of mutual information, which is a notoriously hard and impractical problem in high-dimensional multi-view spaces. In this work, we propose a new differentiable information bottleneck (DIB) method, which provides a deterministic and analytical MVC solution by fitting the mutual information without the necessity of variational approximation. Specifically, we first propose to directly fit the mutual information of high-dimensional spaces by leveraging normalized kernel Gram matrix, which does not require any auxiliary neural estimator to estimate the lower bound of mutual information. Then, based on the new mutual information measurement, a deterministic multi-view neural network with analytical gradients is explicitly trained to parameterize IB principle, which derives a deterministic compression of input variables from different views. Finally, a triplet consistency discovery mechanism is devised, which is capable of mining the feature consistency, cluster consistency and joint consistency based on the deterministic and compact representations. Extensive experimental results show the superiority of our DIB method on 6 benchmarks compared with 13 state-of-the-art baselines.
\end{abstract}
\renewcommand{\thefootnote}{}
\footnotetext{$^*$ Corresponding author.}

\section{Introduction}

Multi-view clustering (MVC)~\cite{10108535} aims to discover hidden patterns or potential structures by leveraging the complementary information in multi-view data. In the literature, MVC involving traditional machine learning techniques can be classified into graph-based~\cite{9860070}, subspace-based~\cite{9358980} and matrix factorization-based methods~\cite{Liu_2021_ICCV}. However, these traditional MVC based on shallow learning models often exhibits poor representation ability on large-scale high-dimensional and non-linear multi-view data.
Recently, deep learning models have seen widespread adoption in MVC owing to their powerful representation capability, resulting in  deep MVC~\cite{DBLP:conf/ijcai/LiWTGY19,DBLP:journals/corr/abs-2205-03803,DBLP:journals/tnn/WangTXGCJ23,DBLP:journals/tnse/LinDWG23,DBLP:conf/cvpr/XuT0P0022,chen2023deep,DBLP:conf/cvpr/YanZLT0LL23,DBLP:journals/tkde/XuRTYPYPYH23}. Although achieving promising performance, most existing deep MVC emphasizes the relevant correlations across multiple views and ignores the limitations of the irrelevant information in each view, such as noises, corruptions or even view-private attributes.

In addressing these challenges, several recent approaches have resorted to the notable information bottleneck (IB) principle to multi-view clustering~\cite{DBLP:conf/ijcai/LouYY13,xu2014large}. By formulating mutual information (MI), IB provides an information-theoretic framework to learn a compact representation and  remove irrelevant information for a given task~\cite{tishby1999information}. Despite the successful applications, the estimation of mutual information is a notoriously hard problem in high-dimensional multi-view space since the complicated joint distribution of two variables is often criticized to be hard or impossible. To overcome this constraint, variational approximation offers a natural solution to construct and estimate a lower bound of the mutual information of high-dimensional variables~\cite{DBLP:conf/iclr/AlemiFD017,wang2023self}. 
Inspired by this, IB and its variational versions have achieved promising performance by exploring the training dynamics in deep multi-view clustering and representation models as well as a learning objective~\cite{DBLP:conf/aaai/MaoYGY21,10313078,yan2023cross}. 
However, the variational approximation in existing IB-based deep MVC results in the uncertainty in multi-view representation learning.
Specifically, existing IB-based deep MVC leverages variational approximation to estimate the marginal and posterior probability distribution of the potential feature representations (as shown in Fig.1). In the process of variational approximation, IB-based deep MVC methods introduce an auxiliary neural network to estimate the mean and variance of the posterior distribution so as to fit the posterior distribution. Then, they impose $Kullback$-$Leibler$ (KL) divergence constraint between the posterior distribution and variational approximation to align them~\cite{DBLP:conf/iclr/AlemiFD017}. Thus, the approximation error introduced by variational approximation increases the uncertainty of mutual information estimation.

\begin{figure}[t]
\begin{center}
	\includegraphics[width=0.9\linewidth]{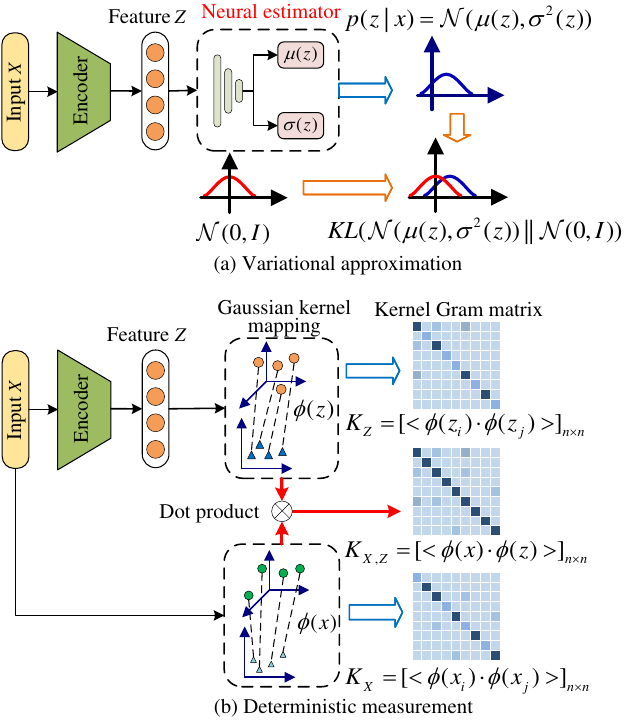}
\end{center}
\caption{Variational approximation and deterministic measurement. (a) Variational approximation requires a neural estimator to estimate the posterior distribution $p(z|x)$ of the representation while assuming the marginal distribution of the representation follows a standard normal distribution, so as to approximate the lower bound of the mutual information. (b) Our deterministic measurement leverages the gaussian kernel function to construct the kernel Gram matrix which can measure the distance between data pairs. Then the eigenvalues of the Gram matrix can be expressed to entropy function (see section~\ref{MIWVA} for detailed proof). }\label{deterministic}
\end{figure}

In this study, we propose a novel differentiable information bottleneck (DIB) method for deterministic and analytical multi-view clustering without variational approximation. As shown in Figure~\ref{model}, DIB learns a latent and compact space for each view in a deterministic compression manner while capturing triplet consistency derived from high-level features and semantic labels across multiple views. To this end, we first design a novel MI measurement to directly fit the mutual information between high-dimensional multi-view spaces by leveraging normalized kernel Gram matrix, which can measure the information about feature representations directly from the original data and does not need any neural estimators to learn the lower bound of mutual information. Then, based on the proposed MI measurement without variational approximation, a deterministic multi-view neural network is explicitly trained to parameterize IB principle with analytical gradients, which derives a deterministic compression and learns a compact representation for each view. Finally, a unified objective function under our DIB framework is devised to optimize the deterministic compression and triplet consistency discovery simultaneously, in which consistent information of multiple views from high-level features and semantic labels is characterized based on the deterministic and compact representations. 
Extensive experimental results verify the effectiveness and promising performance of DIB compared with state-of-the-art baselines. In summary, this study makes the following contributions.
\begin{itemize}
\item We propose a novel differentiable information bottleneck (DIB) method for deterministic multi-view clustering, which provides a deterministic and analytical MVC solution by essentially fitting the mutual information without the necessity of variational approximation.
\item A novel MI measurement without variational approximation is proposed to fit the mutual information of high-dimensional spaces directly by eigenvalues of the normalized kernel Gram matrix. This work is a valuable attempt to directly measure the information about feature representations from the data rather than building a neural estimator to approximate the lower bound of mutual information.
\item A deterministic neural network with analytical gradients is built to parameterize IB principle, which enjoys a concise and tractable objective and provides a deterministic compression of input variables from different views.
\end{itemize}

\section{Related Work and Preliminaries}

\subsection{Information Bottleneck}
The information bottleneck (IB)~\cite{tishby1999information} originates from rate-distortion and attempts to compress source variable $\textbf{X}$ into its compressed representation $\textbf{Z}$ while preserving the information that can predict relevant variable $\textbf{Y}$. It is assumed that we have access to the joint distribution $p(\textbf{X},\textbf{Y})$ with the goal of pursuing the following quantization
\begin{equation}\label{IB-func}
	\max{\textbf{IB}}_{\beta}=I(\textbf{Z};\textbf{Y}) - \beta I(\textbf{X};\textbf{Z})
\end{equation}
where $I()$ is the mutual information and $\beta$ is a trade-off parameter.

Recently, the idea of exploring a good representation with IB principle is becoming prevalent and it also achieves great success in  deep multi-view learning, such as multi-view clustering~\cite{DBLP:conf/aaai/MaoYGY21,10313078,yan2023cross}, multi-view representation learning~\cite{DBLP:conf/iclr/Federici0FKA20,DBLP:conf/aaai/WanZZH21,DBLP:journals/corr/abs-2204-12496}, multi-view graph clustering~\cite{DBLP:journals/corr/abs-2210-07011}.  However, both the ``black box" operation of neural network and the approximate error introduced by variational approximation increase the uncertainty of mutual information estimation.

Different from the aforementioned approaches, DIB provides a deterministic and analytical MVC solution by essentially fitting the mutual information without the necessity of variational approximation. Moreover, the new measurement of mutual information is differentiable and can explicitly parameterize IB principle with analytical gradients.

\subsection{Deep Multi-view Clustering}
Existing deep MVC approaches can be classified into the following categories: deep embedding-based, deep graph-based, deep adversarial-based and contrastive MVC. Deep embedding-based MVC jointly optimizes the embedded representation of multiple views and the clustering process~\cite{DBLP:journals/corr/abs-2205-03803,DBLP:journals/isci/XuRLPZX21}. Deep graph-based MVC learns the cluster structures of multi-view data by forming a better graph from multiple views~\cite{DBLP:journals/tnse/LinDWG23,DBLP:journals/isci/ZhaoYN23}. Deep adversarial-based MVC uses adversarial training as a regularizer to align the multi-view data~\cite{DBLP:conf/ijcai/LiWTGY19}. Contrastive MVC enables a better latent space of multiple views by characterizing the positive and negative samples~\cite{DBLP:conf/aaai/Li0LPZ021,DBLP:conf/cvpr/XuT0P0022}.

The proposed DIB is remarkably different from existing deep MVC approaches. First, DIB constructs a view-specific encoder with the constraint of differentiable mutual information, which can learn a compact and discriminative representation for each view by preserving relevant information and eliminating irrelevant information simultaneously. Second, DIB leverages a deterministic neural network with analytical gradients driven by the mutual information without variational approximation to parameterize IB principle, which enjoys a concise and tractable objective and provides a deterministic compression of input variables from different views. Third, a triplet consistency discovery mechanism under our DIB framework is devised, which capture the feature consistency, cluster consistency and joint consistency in a triplet manner.

\section{Differentiable Information Bottleneck}
\subsection{Problem Statement}
\emph{Problem statement}. Given an unlabelled multi-view collection $\{\textbf{X}^v \in \mathbb{R}^{N\times D^v}\}_{v=1}^V$, multi-view clustering aims to partition the data samples into $K$ clusters, where $V$ is the number of views, $x_i^v\in \mathbb{R}^{D^v}$ is the samples of the $v$-th view, $N$ and $D^v$ are the data size and feature dimension of the $v$-th view respectively.

Recently, the deep MVC approaches based on IB principle have achieved huge success since it provides an information-theoretic framework to learn a compact representation and remove irrelevant information for a given task. However, despite the successful applications, the approximate error introduced by variational approximation increases the uncertainty of mutual information estimation. Aiming at these issues, we propose a novel differentiable information bottleneck for deterministic MVC. Intuitively, DIB should meet the following requirements. 1) \textbf{Information measurement}. It should directly measure the information of original data about its feature representations and does not need any neural estimators to learn the lower bound of mutual information. 2) \textbf{Deterministic compression}. A neural network driven by the information measurement without variational approximation should have analytical gradients that allow us to parameterize the IB principle and optimize it through backward propagation. 3) \textbf{Consistency maximization}. DIB should characterize the consistency of multiple views more comprehensively based on the deterministic and compact representations.
To facilitate these goals, we design the network architecture of DIB method as shown in Fig. 2. From this figure, we can see that DIB consists of deterministic compression and triplet consistency discovery. For convenience, we first provide the definition of the proposed DIB method.

\begin{myDef}[Differentiable information bottleneck, DIB]
	Suppose there exists an unlabelled multi-view collection $\{\textbf{X}^v \in \mathbb{R}^{N\times D^v}\}_{v=1}^V$, DIB consists of deterministic compression and triplet consistency discovery. The deterministic compression part learns a deterministic and compact representation $\{\textbf{Z}^v\}_{v=1}^V$ for each view $\{\textbf{X}^v\}_{v=1}^V$ using view-specific encoder $E^v$ with an information-theoretic constraint. In the triplet consistency discovery part, we explore the consistency of multiple views from high-level features $\{\textbf{H}^v\}_{v=1}^V$  and semantic labels $\{\textbf{S}^v\}_{v=1}^V$ in a triplet manner. In summary, the goal of DIB is to search a reasonable clustering assignment $\textbf{C}$ by learning a deterministic and compact representation for each view while maximally preserving the consistency across multiple views.
\end{myDef}

\subsection{Mutual Information without Variational Approximation}\label{MIWVA}
In this subsection, we design a novel mutual information measurement without variational approximation to directly fit the mutual information of high-dimensional spaces by leveraging normalized kernel Gram matrix. Specifically, we first show the eigenvalues of the kernel Gram matrix can be expressed by the recently proposed R$\acute{e}$nyi's $\alpha$-order entropy~\cite{DBLP:books/sp/P2010}. Then, the mutual information can be achieved via matrix-based R$\acute{e}$nyi's $\alpha$-order entropy function and the joint-entropy function without variational approximation. 
For convenience, we first provide the definition of Gram matrix.
\begin{myDef}[Gram matrix]
	Given a set of vectors $\{v_i\}_{i=1}^n$ in an inner product space, the Gram matrix $G$ is defined as an $n\times n$ matrix with entries
	\begin{equation}\label{entropy}
		\scriptsize
		\begin{aligned}
			& G_{ij}=<v_i, v_j> \\
		\end{aligned}
	\end{equation}
	where $<v_i, v_j>$ denotes the inner product of vectors $v_i$ and $v_j$.
\end{myDef}

Different from approximating the lower bound of mutual information through a neural estimator in variational approximation~\cite{DBLP:conf/iclr/AlemiFD017}, we utilize the eigenvalues of the kernel Gram matrix to directly fit the recently proposed R$\acute{e}$nyi's $\alpha$-order entropy as shown in following proposition.

\begin{figure}[t]
	\begin{center}
		\includegraphics[width=0.9\linewidth]{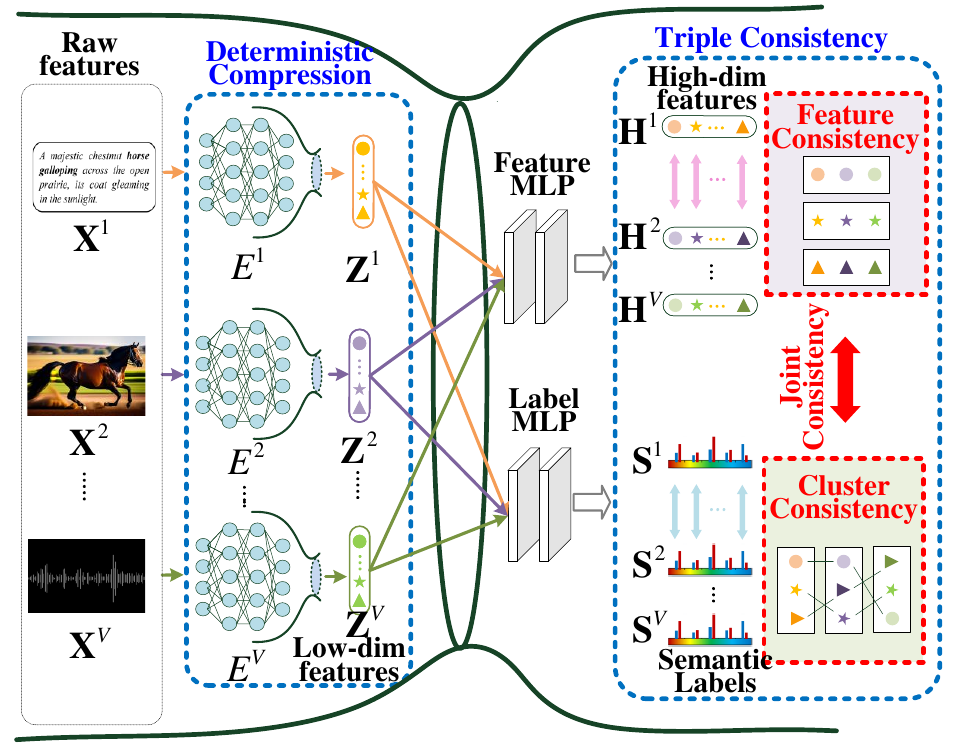}
	\end{center}
	\caption{Framework of the DIB. In DIB, the deterministic compression aims to learn a compact representation for each view through the new mutual information measurement without variational approximation. The triplet consistency discovery mechanism is devised to mine the feature, cluster and joint consistency from the compact representation.}\label{model}
\end{figure}

\begin{Proposition}\label{directlyFit}
	The R$\acute{e}$nyi's $\alpha$-order entropy of a random variable $\textbf{X} \in \mathcal{R}^{N\times D^v}$ can be fitted by the eigenvalues of a Gram matrix which is constructed by evaluating a positive definite kernel function for each pair of data points.
\end{Proposition}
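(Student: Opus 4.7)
The plan is to follow the matrix-based Rényi $\alpha$-entropy framework (in the spirit of Sanchez-Giraldo--Rao--Principe), which bypasses explicit density estimation by working directly with the spectrum of a normalized kernel Gram matrix. First I would recall that the classical Rényi $\alpha$-order entropy of a continuous random variable with density $p$ is $H_\alpha(\textbf{X}) = \tfrac{1}{1-\alpha}\log\int p(x)^\alpha\,dx$, which is intractable in $\mathbb{R}^{D^v}$ because $p$ itself must be estimated. The aim is to replace this density-based quantity by a purely spectral object evaluated from kernel pairings at the observed samples, so that no explicit density model is needed.

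Next I would carry out the construction. Given samples $\{x_i\}_{i=1}^N$ and a positive definite kernel $\kappa$ (take Gaussian for concreteness, matching the paper), form the Gram matrix $G_{ij} = \kappa(x_i,x_j)$ as in Definition~2, and normalize it as
\begin{equation}
A_{ij} \;=\; \frac{1}{N}\,\frac{G_{ij}}{\sqrt{G_{ii}\,G_{jj}}}.
\end{equation}
Since $\kappa$ is positive definite, $A \succeq 0$; since the diagonal of $A$ equals $1/N$, we have $\operatorname{tr}(A)=1$. Hence its eigenvalues $\{\lambda_i\}_{i=1}^N$ are non-negative with $\sum_i \lambda_i = 1$, i.e.\ they form a valid probability distribution on which a Rényi functional can be evaluated.

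Then I would invoke Mercer's theorem to expand $\kappa$ in terms of its eigenfunctions with respect to the data-generating measure, which identifies the eigenvalues of $A$ as empirical surrogates for the spectrum of the associated integral operator on the reproducing kernel Hilbert space. Define the matrix-based Rényi entropy
\begin{equation}
S_\alpha(A) \;=\; \frac{1}{1-\alpha}\,\log_2\!\bigl(\operatorname{tr}(A^\alpha)\bigr) \;=\; \frac{1}{1-\alpha}\,\log_2\!\Bigl(\sum_{i=1}^N \lambda_i^{\alpha}\Bigr),
\end{equation}
and verify the axioms expected of a Rényi-type entropy: invariance under unitary conjugation (so independence of the chosen basis), non-negativity, monotonicity in $\alpha$, and recovery of a von Neumann-style limit as $\alpha \to 1$. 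This gives the closed-form expression in terms of eigenvalues claimed by the proposition and shows that all ingredients are differentiable in the kernel bandwidth and in the inputs $x_i$, which is what the rest of the paper needs.

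The main obstacle will be making precise what "fitted by" means, since a literal equality with the continuous differential entropy $H_\alpha(\textbf{X})$ does not hold: the claim must be interpreted in the RKHS sense, namely that $S_\alpha(A)$ is the Rényi entropy of the projection of $\textbf{X}$ onto the feature space induced by $\kappa$, and that it converges, as $N\to\infty$ and with an appropriate bandwidth schedule, to the spectrum of the underlying integral operator. The remaining steps---the unit-trace normalization, positive semi-definiteness, and the identity $\operatorname{tr}(A^\alpha)=\sum_i \lambda_i^{\alpha}$---are essentially spectral bookkeeping once the normalization above is fixed.
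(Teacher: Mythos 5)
Your plan is sound, but it takes a genuinely different route from the paper. The paper argues constructively via density estimation: it plugs a Parzen window estimate $\hat{p}(x)=\frac{1}{n}\sum_i g_\sigma(x,x_i)$ into the classical Rényi formula for the special case $\alpha=2$, uses the fact that the product of two Gaussian kernels integrates in closed form, and obtains $\hat{H}_2(\textbf{X})=-\log_2\bigl(\frac{1}{n^2}\operatorname{tr}(G^TG)\bigr)$, i.e.\ the plug-in estimate of the density-based entropy collapses exactly to a Gram-matrix (hence spectral) expression; the extension to general $\alpha$ is deferred to the supplementary material. You instead go the axiomatic, Giraldo--Rao--Principe route: normalize the Gram matrix to unit trace, observe its eigenvalues form a probability vector, define $S_\alpha(A)=\frac{1}{1-\alpha}\log_2\operatorname{tr}(A^\alpha)$, check Rényi-type axioms, and interpret ``fitted by'' through Mercer's theorem and convergence of the Gram spectrum to that of the kernel integral operator. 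Each approach buys something: yours covers all $\alpha$ uniformly, makes the unit-trace normalization and differentiability explicit (which the downstream Proposition~2 needs), and is honest that the matrix functional is not literally the differential entropy; the paper's Parzen computation, by contrast, makes the ``fitting'' claim concrete and quantitative for $\alpha=2$ without any appeal to operator limits, which is precisely the ingredient that would resolve the obstacle you flag --- if you want your argument to match the proposition's wording more tightly, you could import that $\alpha=2$ calculation (kernel density estimate plus the Gaussian convolution identity) as the base case and then present your spectral definition as its general-$\alpha$ extension.
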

\begin{proof}
	According to ~\cite{renyi1961measures}, the R$\acute{e}$nyi's $\alpha$-order entropy of a random variable $\textbf{X}$ can be defined as follows
	\begin{equation}\label{basicRenyi}
		\footnotesize
		\begin{aligned}
			& H_\alpha (\textbf{X}) = \frac{1}{1-\alpha} \log_2\int_{\mathcal{X}}p^\alpha(x)dx\\
		\end{aligned}
	\end{equation}
	where $\alpha \in (0,1) \cup (1,\infty)$, $p(x)$ is the probability density function of the random variable $\textbf{X}$.
	
	To better understanding, we take $\alpha = 2$ as an example to illustrate how the the eigenvalues of a Gram matrix fit the R$\acute{e}$nyi's 2-order entropy. For $\alpha = 2$, we can leverage the Parzen density estimator~\cite{jenssen2006cauchy} with a Gaussian kernel $g_\sigma(x, y)=\text{exp}\big(-\frac{1}{2\sigma^2}||x-y||^2\big)$ to calculate the probability density function $p(x)$, i.e., $\hat{p}(x)=\frac{1}{n}\sum_{i=1}^ng_\sigma(x, x_i)$, which can be plugged into the Eq.~\ref{basicRenyi}, yields
	\begin{equation}\label{densityEstimator}
		\footnotesize
		\begin{aligned}
			\hat{H}_2(\textbf{X})&=-\text{log}_2\int_\mathcal{X}\hat{p}^2(x)dx \\
			&=-\text{log}_2\int_\mathcal{X}\biggl(\frac{1}{n}\sum_{i=1}^ng_\sigma(x,x_i)\biggl)^2dx \\
			&=-\text{log}_2\biggl(\frac{1}{n^2}\sum_{j=1}^n\sum_{i=1}^n\int_\mathcal{X}g_\sigma(x,x_j)g_\sigma(x,x_i)dx\biggl) \\
			&=-\text{log}_2\frac{1}{n^2}\sum_{i=1}^n\sum_{j=1}^ng_{\sigma}(x_i,x_j)
		\end{aligned}
	\end{equation}
	
	According to Eq.~\ref{densityEstimator}, the R$\acute{e}$nyi's 2-order entropy~\cite{DBLP:journals/tit/GiraldoRP15} of a random variable $\textbf{X}$ can be characterized in terms of the eigenvalues of a Gram matrix $G$, where $G_{ij}=g_\sigma(x_i,x_j)$. Then, the Eq.~\ref{densityEstimator} can be rewritten as
	\begin{equation}
		\footnotesize
		\begin{aligned}
			\hat{H}_2(\textbf{X})&= -\text{log}_2\big(\frac{1}{n^2}\text{tr}(G^TG)\big)
		\end{aligned}
	\end{equation}
	
	The derivation of the R$\acute{e}$nyi's $\alpha$-order entropy from its 2-order version can be found in Supplementary A, so $H_\alpha(\textbf{X})$ can be calculated directly by the eigenvalues of a Gram matrix $G$ which obtained by evaluating a positive definite kernel function $g_\sigma$ for each pair of data points, but without considering intermediate steps in density estimation.
\end{proof}

To facilitate the calculation of mutual information, we provide the definition of matrix-based R$\acute{e}$nyi's $\alpha$-order entropy and joint entropy function as follows.

\begin{myDef}[Matrix-based R$\acute{e}$nyi's $\alpha$-order entropy function]
	Given a data collection $\textbf{X}^v \in \mathcal{R}^{N\times D^v}$ from the $v$-th view, where $N$ is the number of data samples. The Gram matrix $G_\mathbf{x}$ is obtained by calculating a positive definite kernel function $g_\mathbf{x}$ for all data pairs, i.e., $G_\mathbf{x}(i,j)=g_\mathbf{x}(x^v_i, x^v_j)$. The matrix-based R$\acute{e}$nyi's $\alpha$-order entropy of $X^v$ can be defined as follows
	\begin{equation}\label{entropy}
		\scriptsize
		\begin{aligned}
			& H_{\alpha} (A^v_\mathbf{x}) = \frac{1}{1-\alpha} \log_{2}(tr((A^v_\mathbf{x})^\alpha)) = \frac{1}{1-\alpha} \log_{2} \big(\sum_{i=1}^N \lambda_i(A^v_\mathbf{x})^{\alpha}\big) \\
		\end{aligned}
	\end{equation}
	where $\alpha \in (0,1) \cup (1,\infty)$, $A^v_\mathbf{x} = \frac{G_\mathbf{x}}{tr(G_\mathbf{x})}$ is normalized from the Gram matrix $G_\mathbf{x}$, $\lambda_i(A^v_\mathbf{x})$ indicates the $i$-th eigenvalue of $A^v_\mathbf{x}$.
\end{myDef}

\begin{myDef} [Matrix-based R$\acute{e}$nyi's $\alpha$-order joint-entropy function]
	Given a data collection $\textbf{X}^v \in \mathcal{R}^{N\times D^v}$ from the $v$-th view and its corresponding representation $\textbf{Z}^v \in \mathcal{R}^{N\times D^v}$, the matrix-based R$\acute{e}$nyi's $\alpha$-order joint-entropy can be defined as follows
	\begin{equation}\label{joint-entropy}
		\footnotesize
		\begin{aligned}
			& H_{\alpha} (A^v_\mathbf{x},A^v_\mathbf{z}) = H_{\alpha}\left( \frac{A^v_\mathbf{x} \circ A^v_\mathbf{z}}{tr(A^v_\mathbf{x}\circ A^v_\mathbf{z})} \right)\\
		\end{aligned}
	\end{equation}
	where $A^v_\mathbf{x} \circ A^v_\mathbf{z}$ indicates the Hadamard product between $A^v_\mathbf{x}$ and $A^v_\mathbf{z}$.
\end{myDef}

Given Eq.~\ref{entropy} and Eq.~\ref{joint-entropy}, the MI between high-dimensional variables can be directly calculated as follows
\begin{equation}\label{DMI}
	\footnotesize
	\begin{aligned}
		I_{\alpha} (\textbf{X}^v;\textbf{Z}^v) = H_{\alpha}(A^v_\mathbf{x}) + H_{\alpha}(A^v_\mathbf{z}) - H_{\alpha}(A^v_\mathbf{x},A^v_\mathbf{z})\\
	\end{aligned}
\end{equation}

In next subsection, we prove that the novel MI measurement without variational approximation has analytical gradients that allow us to parameterize the IB principle and optimize it as an objective.

\subsection{Deterministic Compression}
IB aims to compress data observations and preserve the relevant information for a given task. Recently, it has been applied to analyze and understand the learning dynamics of DNNs~\cite{DBLP:conf/itw/TishbyZ15} benefiting by the progress of MI neural estimators, such as variational approximation~\cite{DBLP:conf/icml/BelghaziBROBHC18}. However, existing MI neural estimators need the explicit estimation of the underlying distributions of data (more details can be found in Supplementary A),
which often results in the uncertainty in representation learning. In this study, we design a novel MI measurement without variational approximation (Eq.~\ref{DMI}), which is capable of directly fitting the MI of high-dimensional spaces.
However, its differentiate property is unclear, which impedes its practical deployment as a loss function to parameterize IB principle. Next, we prove the MI measurement without variational approximation in Eq.~\ref{DMI} has an analytical gradient.

\begin{Proposition}\label{differentation}
	Given a data collection $\textbf{X}^v \in \mathcal{R}^{N\times D^v}$ from the $v$-th view and its corresponding representation $\textbf{Z}^v \in \mathcal{R}^{N\times D^v}$, the mutual information measurement in Eq.~\ref{DMI} has an analytical gradient.
\end{Proposition}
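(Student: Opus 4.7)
The plan is to unfold the quantity $I_{\alpha}(\textbf{X}^v;\textbf{Z}^v) = H_{\alpha}(A^v_\mathbf{x}) + H_{\alpha}(A^v_\mathbf{z}) - H_{\alpha}(A^v_\mathbf{x},A^v_\mathbf{z})$ into a chain of elementary maps and to show that each link is smooth with a closed-form Jacobian, so that the overall gradient with respect to $\textbf{Z}^v$ (and thereby the encoder parameters) assembles analytically by the chain rule. Since $H_{\alpha}(A^v_\mathbf{x})$ depends only on the input data, its contribution to $\nabla_{\textbf{Z}^v} I_{\alpha}$ vanishes. The remaining two terms factor through the composition $\textbf{Z}^v \mapsto G_\mathbf{z} \mapsto A^v_\mathbf{z} \mapsto H_{\alpha}(A^v_\mathbf{z})$ and, for the joint term, through the additional Hadamard product with the (constant in $\textbf{Z}^v$) matrix $A^v_\mathbf{x}$ followed by the same entropy functional; a single application of the chain rule therefore suffices once each intermediate derivative is produced.

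I would first dispatch the straightforward links. The Gaussian kernel entries $G_\mathbf{z}(i,j) = \exp(-\|z_i - z_j\|^2/(2\sigma^2))$ are smooth in $\textbf{Z}^v$ with partial derivatives that can be written in two lines. The normalisation $A^v_\mathbf{z} = G_\mathbf{z}/\operatorname{tr}(G_\mathbf{z})$ is a rational map whose denominator is strictly positive (the diagonal entries of $G_\mathbf{z}$ are one, so $\operatorname{tr}(G_\mathbf{z})=N$), so the quotient rule gives a bounded, smooth derivative. The Hadamard product inside the joint entropy is bilinear and has the trivial coordinate derivative $\partial (A^v_\mathbf{x}\circ A^v_\mathbf{z})(i,j)/\partial A^v_\mathbf{z}(k,\ell) = A^v_\mathbf{x}(i,j)\,\delta_{ik}\delta_{j\ell}$.

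The main obstacle is differentiating $H_{\alpha}(A) = \frac{1}{1-\alpha}\log_2 \operatorname{tr}(A^{\alpha})$ with respect to $A$ when $\alpha$ is non-integer, because the matrix power is defined only through spectral calculus. I would lean on the fact that $A^v_\mathbf{z}$ is symmetric and positive semi-definite by construction and therefore admits an eigendecomposition $A = U\Lambda U^{\top}$ with $\operatorname{tr}(A^{\alpha}) = \sum_i \lambda_i^{\alpha}$, an analytic function of the eigenvalues on the open set where they are strictly positive. Applying the standard matrix-calculus identity $\partial \operatorname{tr}(A^{\alpha})/\partial A = \alpha\, A^{\alpha-1}$ then gives the closed form
\[
\frac{\partial H_{\alpha}(A)}{\partial A} = \frac{\alpha}{(1-\alpha)\ln 2}\,\frac{A^{\alpha-1}}{\operatorname{tr}(A^{\alpha})}.
\]
Well-posedness on the boundary (where an eigenvalue may vanish) can be secured by a vanishingly small additive regulariser, which does not affect the analyticity of the gradient on the optimisation trajectory.

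Composing the kernel derivative, the normalisation derivative, the Hadamard derivative, and the matrix-power derivative through the chain rule expresses $\nabla_{\textbf{Z}^v} I_{\alpha}$ as an explicit algebraic function of $\textbf{Z}^v$, $A^v_\mathbf{x}$, $A^v_\mathbf{z}$ and the matrix powers $(A^v_\mathbf{z})^{\alpha-1}$, $(A^v_\mathbf{x}\circ A^v_\mathbf{z})^{\alpha-1}$. Every factor is smooth in $\textbf{Z}^v$, so the gradient is analytical and the proposed mutual information measurement can be backpropagated through the view-specific encoder, as claimed.
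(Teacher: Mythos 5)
Your proof takes essentially the same route as the paper: the heart of both arguments is the closed-form matrix derivative of $H_\alpha(A)=\frac{1}{1-\alpha}\log_2\operatorname{tr}(A^{\alpha})$, i.e.\ $\partial\operatorname{tr}(A^{\alpha})/\partial A=\alpha A^{\alpha-1}$ applied to the entropy and (Hadamard-normalised) joint-entropy terms, combined with the additive decomposition of $I_{\alpha}$. You are in fact somewhat more thorough than the paper's proof, which only states the two partial derivatives with respect to the normalised Gram matrix and stops: you carry the chain rule back through the Gaussian kernel and the trace normalisation to $\textbf{Z}^v$, note that the $H_{\alpha}(A^v_\mathbf{x})$ term is constant, justify the non-integer matrix power via spectral calculus with the $\ln 2$ factor kept (the paper's displayed formula drops it and contains an apparently duplicated $(1-\alpha)$ factor), and address the zero-eigenvalue boundary case.
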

\begin{proof}
	First, we present the gradient of $H_{\alpha}(A^v_\mathbf{x})$, which can be calculated as follows
	\begin{equation}
		\footnotesize
		\begin{aligned}
			& \frac{\partial H_{\alpha}(A^v_\mathbf{x})}{\partial A^v_\mathbf{x}} = \frac{\alpha}{1-\alpha} \frac{(A^v_\mathbf{x})^{\alpha-1}}{(1-\alpha)tr((A^v_\mathbf{x})^{\alpha})}\\
		\end{aligned}
	\end{equation}
	
	Similarly, the gradient of $H_{\alpha} (A^v_\mathbf{x},A^v_\mathbf{z})$ can be calculated as follows
	\begin{equation}
		\footnotesize
		\begin{aligned}
			& \frac{\partial H_{\alpha}(A^v_\mathbf{x},A^v_\mathbf{z})}{\partial A^v_\mathbf{x}} = \frac{\alpha}{1-\alpha} \left[ \frac{(A^v_\mathbf{x}\circ A^v_\mathbf{z})^{\alpha-1} \circ A^v_\mathbf{z}}{tr(A^v_\mathbf{x}\circ A^v_\mathbf{z})^\alpha} - \frac{I\circ A^v_\mathbf{z}}{tr(A^v_\mathbf{x}\circ A^v_\mathbf{z})} \right]\\
		\end{aligned}
	\end{equation}
	
	Since $I_{\alpha} (\textbf{X}^v;\textbf{Z}^v) = H_{\alpha}(A^v_\mathbf{x}) + H_{\alpha}(A^v_\mathbf{z}) - H_{\alpha}(A^v_\mathbf{x},A^v_\mathbf{z})$, $I_{\alpha} (\textbf{X}^v;\textbf{Z}^v)$ has an analytical gradient.
\end{proof}

In practice, the gradient of $I_{\alpha} (\textbf{X}^v;\textbf{Z}^v)$ can be computed using automatic differentiation libraries like Tensorflow and PyTorch.

In summary, based on \textbf{Proposition}~\ref{directlyFit}, DIB can fit the mutual information from the original data and feature representation directly. Based on \textbf{Proposition}~\ref{differentation}, the MI measurement without variational approximation has analytical gradients that allow us to parameterize the IB principle and optimize it as an objective. Thus, we construct a view-specific encoder $E^v$ with analytical gradients for each view to parameterize the IB principle by directly fitting the MI $I_\alpha(\textbf{X}^v; \textbf{Z}^v)$ between the original view data $\{\textbf{X}^v\}_{v=1}^V$ and representations $\{\textbf{Z}^v\}_{v=1}^V$, i.e., $\textbf{Z}^v=E^v(\textbf{X}^v)$, which derives a deterministic compression of input variables from from different views. Note that, the MI measurement in Eq.~\ref{DMI} do not need any neural estimators to explicit estimate the underlying distribution of data, which enables us to parameterize IB principle with a deterministic neural network. Thus, we can obtain the loss function of deterministic compression inspired by IB principle as follows
\begin{equation}\label{compress}
	\footnotesize
	\begin{aligned}
		& \text{min }\mathcal{L}_{comp}=\sum_{v=1}^V I_\alpha(\textbf{X}^v;\textbf{Z}^v) \\
	\end{aligned}
\end{equation}

\subsection{Triplet Consistency Discovery}
In MVC scenarios, another key issue is to capture the consistency across multiple views. Generally, multiple views of a data sample are different in attributes or input forms but show consistency in high-level features and semantics, which is also the foundation
for effective MVC. Based on this observation, we first transform the compact representation of each view into high-level feature and cluster spaces by MLP. Then, a triplet consistency discovery mechanisms designed to mine the consistency across views from high-level features, clusters and joint of features and clusters.

Firstly, we design a \textbf{feature consistency} to make high-level features $\{\textbf{H}^v\}_{v=1}^V$ focus on learning the common features across multiple views, which can be characterized through the popular contrastive learning~\cite{10313078,DBLP:conf/cvpr/He0WXG20}. Specifically, each high-level feature $h_i^v$ can form $(VN-1)$ feature pairs $\{h_i^v, h_j^m\}_{j=1,\dots,N}^{m=1,\dots,V}$ with all features except itself, where $\{h_i^v, h_i^m\}_{m \neq v}$ denotes $(V-1)$ positive feature pairs and $\{h_i^v, h_j^m\}_{j=1,\dots,N}^{m=1,\dots,V} - \{h_i^v, h_i^m\}_{m \neq v}$ denotes $V(N-1)$ negative feature pairs. In contrastive learning, the goal is to maximize the similarities between positive pairs while minimizing those of negative pairs. Then, the similarity between two features can be measured by the cosine distance as follows
\begin{equation}
	\footnotesize
	\begin{aligned}
		d(h_i^v, h_j^m)=\frac{<h_i^v, h_j^m>}{||h_i^v||\ ||h_j^m||}
	\end{aligned}
\end{equation}
where $<\cdot,\cdot>$ represents the dot product operator. And then, the feature consistency objective $\mathcal{L}_H$ between high-level features $\{\textbf{H}^v\}_{v=1}^V$ can be formulated as
\begin{equation}\label{contrastive}
	\footnotesize
	\begin{aligned}
		\text{max }\mathcal{L}_{fea} &= \sum_{v=1}^V\sum_{m \neq v}I(\textbf{H}^v;\textbf{H}^m) \\
		&\approx \sum_{v=1}^V\sum_{m \neq v}\mathbb{E}\left[\text{log}\frac{e^{d(h_i^v, h_i^m)}}{\sum_{d(h_i^v,h_j^k)\in Neg}^Ne^{d(h_i^v,h_j^k)}}\right]\\
		&+V(V-1)\text{log }N
	\end{aligned}
\end{equation}
where $Neg$ denotes negative feature pairs, and $d(h_i^v,h_j^k)$ is the similarity of negative feature pairs.


\begin{algorithm}[t!]
	\small
	\caption{Differentiable Information Bottleneck}
	\label{DIB}
	\begin{algorithmic}[1]
		\STATE{\bfseries Input:} Multi-view data $\{\mathbf{X}^v\}_{v=1}^V$, cluster number $K$, iteration number $I_t$.
		\STATE \textbf{Random initialization:} Initialize encoder $E^v$;
		\FOR{epoch $\in \{0,1,2,\cdots,I_t\}$}
		\STATE Obtain the representations $\{\mathbf{Z}^v\}_{v=1}^V$ via $\mathbf{Z}^v=E^v(\mathbf{X}^v)$.
		\STATE Obtain the high-level features $\{\mathbf{H}^v\}_{v=1}^V$ and semantic labels $\{\mathbf{S}^v\}_{v=1}^V$ through the feature MLP and label MLP, respectively.
		\STATE Calculate the feature, cluster consistency loss by Eq.~\ref{contrastive} and calculate the joint consistency by Eq.~\ref{hsconsistency}.
		\STATE Calculate the compress loss function by Eq.~\ref{compress}.
		\STATE Update the parameters of the whole model by back propagation.
		\ENDFOR
		\STATE {\bfseries Output:} The clustering results $\textbf{C}$.
		\smallskip
	\end{algorithmic}
\end{algorithm}

Secondly, we can achieve the \textbf{cluster consistency} by contrastive learning to ensure the identical cluster labels convey consistent high-level semantics across views. Similarly, the cluster consistency objective $\mathcal{L}_{clu}$ between semantic labels $\{\textbf{S}^v\}_{v=1}^V$ also can be calculated by Eq.~\ref{contrastive} (the detailed formulation for $\mathcal{L}_{clu}$ can be found in Supplementary A). 

Finally, we design a \textbf{joint consistency} between high-level features $\{\textbf{H}^v\}_{v=1}^V$ and cluster assignments $\{\textbf{S}^v\}_{v=1}^V$ to further refine the consistency across views. Intuitively, for one data instance, the learned feature representations from different views should maximally preserve the consistency for its cluster labels. This is to say, the mutual information between high-level features and cluster assignments also should be maximized for better clustering. Thus, we define the joint consistency between high-level features $\{\textbf{H}^v\}_{v=1}^V$ and cluster assignments $\{\textbf{S}^v\}_{v=1}^V$ as follows
\begin{equation}\label{hsconsistency}
	\footnotesize
	\begin{aligned}
		\text{max }\mathcal{L}_{joint} = \sum_{v=1}^VI(\textbf{H}^v;\textbf{S}^v)
	\end{aligned}
\end{equation}

DIB consists of deterministic compression and triplet consistency discovery. Similar to IB principle, we construct a trade-off between deterministic compression and triplet consistency discovery as follows

\begin{equation}\label{overall}
	\footnotesize
	\begin{aligned}
		& \mathcal{L}_{overall} = \underbrace{ \max (\mathcal{L}_{fea} + \mathcal{L}_{clu} + \gamma \mathcal{L}_{joint} ) }_{Consistency} + \beta \underbrace{ \min \mathcal{L}_{comp} }_{Compression} \\
	\end{aligned}
\end{equation}
where $\gamma$ and $\beta$ are the trade-off parameters that control the impact of joint consistency and deterministic compression on the final clustering performance. $\mathcal{L}_{fea}$ and $\mathcal{L}_{clu}$ can be calculated by Eq.~\ref{contrastive} as in contrastive clustering~\cite{10313078}, $\mathcal{L}_{joint}$ and $\mathcal{L}_{comp}$ can be calculated by the proposed MI measurement without variational approximation as in Eq.~\ref{DMI}. The DIB is presented in Algorithm~\ref{DIB}.

\section{Experiments}
\subsection{Datasets}
The proposed DIB is evaluated on six widely-used multi-view datasets. 
MNIST-USPS~\cite{DBLP:conf/icml/0001HLZZ19} contains 5,000 samples of handwritten digit images based on two of features distributed across 10 categories.
Berkeley drosophila genome project (BDGP)~\cite{DBLP:journals/bioinformatics/CaiWHD12} contains 2,500 samples of Drosophila embryos belonging to 5 categories, each represented by visual and textual views. Handwritten~\cite{hull1994database} is a popular handwritten character dataset containing 2000 samples drawn by 6 different handwriting styles in 10 categories. Event segmentation and prediction (ESP)~\cite{DBLP:conf/aaaiss/AhnD05} is designed for action recognition, which captures 11,032 samples from 4 different viewpoints with different sensors or cameras, such as RGB cameras and depth sensors. Flickr~\cite{cox2008flickr} is a widely used set of images that contains 12,154 samples from three shooting perspectives from different users at different locations and times, organized into seven categories. For Caltech~\cite{fei2004learning}, we construct three datasets with different numbers of views from on Caltech to evaluate the proposed method. Specifically, Caltech-3V contains WM, CENTRIST and LBP; Caltech-4V includes WM, CENTRIST, LBP and GIST; and Caltech-5V encompasses WM, CENTRIST, LBP, GIST and HOG.
\begin{table*}[t]
	\caption{Clustering performance of all methods on the six datasets. Bold and underline indicate the best and second best results.}\label{comparison}
	\footnotesize
	\begin{center}
		\setlength{\tabcolsep}{2.0mm}
		\begin{tabular}{r|ccc|ccc|ccc|ccc}
			\hline \multicolumn{1}{c|}{Datasets} & \multicolumn{3}{c|}{ MNIST-USPS } & \multicolumn{3}{c|}{ BDGP } & \multicolumn{3}{c|}{ Handwritten } & \multicolumn{3}{c}{ ESP } \\
			\hline \multicolumn{1}{c|}{Evaluation metrics} & ACC & NMI & PUR & ACC & NMI & PUR & ACC & NMI & PUR & ACC & NMI & PUR \\
			\hline BMVC (TPAMI'2019) & 72.22 & 60.37 & 73.50 & 85.68 & 71.98 & 85.68 & 84.45 & 77.59 & 84.45 & 47.95 & 32.01 & 50.33 \\
			SMKKM (ICCV'2021) & 73.35 & 64.58 & 74.20 & 63.72 & 51.22 & 64.66 & $\underline{88.80}$ & 80.88 & $\underline{88.80}$ & 48.76 & 31.23 & 49.37 \\
			OPLFMVC (ICML'2021) & 68.38 & 60.57 & 68.38 & 66.44 & 42.96 & 66.44 & 77.65 & 74.86 & 80.15 & 51.80 & 32.08 & 51.80 \\
			FastMICE (TKDE'2023) & 90.73 & 90.24 & 91.44 & 77.42 & 62.81 & 77.91 & 84.55 & 86.68 & 85.78 & 54.94 & 36.32 & 55.29 \\\hline
			MFLVC (CVPR'2022) & 99.58 & 98.72 & 99.58 & 98.60 & 95.87 & 98.60 & 82.48 & 82.15 & 82.48 & 56.02 & 36.52 & 56.02 \\
			CVCL (ICCV'2023) & 99.58 & 98.81 & 99.58 & $\underline{98.88}$ & $\underline{96.28}$ & $\underline{98.88}$ & 78.10 & 80.77 & 81.70 & 47.05 & 31.80 & 48.64 \\
			AONGR (INS'2023) & 99.36 & 98.23 & 99.36 & 92.04 & 82.47 & 92.04 & 80.30 & 80.34 & 80.40 & 50.68 & 36.77 & 51.27 \\
			GCFAgg (CVPR'2023) & 96.28 & 93.04 & 96.28 & 96.52 & 91.74 & 96.52 & 51.75 & 54.02 & 55.70 & 57.61 & \textbf{40.59} & 57.61 \\
			SDMVC (TKDE'2023) & $\underline{99.82}$ & $\underline{99.47}$ & $\underline{99.82}$ & 96.80 & 92.00 & 96.80 & 77.63 & $\underline{86.92}$ & 77.63 & 49.57 & 36.16 & 49.57 \\ \hline
			DMIM (AAAI'2021) & 98.12 & 97.53 & 98.09 & 93.18 & 92.63 & 93.49 & 81.23 & 83.74 & 82.83 & 56.11 & 37.26 & 55.42 \\
			DMIB (TCYB'2022) & 96.71 & 97.12 & 97.35 & 96.57 & 95.20 & 96.32 & 80.92 & 81.66 & 81.15 & 51.03 & 23.17 & 50.68 \\
			ConGMC (TMM'2023) & 99.01 & 98.45 & 98.62 & 97.28 & 94.36 & 95.51 & 83.65 & 84.29 & 84.57 & $\underline{58.45}$ & 37.62 & $\underline{58.37}$ \\
			DCIB (TNNLS'2023) & 56.67 & 72.38 & 56.84 & 61.01 & 45.46 & 61.80 & 68.60 & 79.48 & 68.60 & 54.40 & 36.18 & 54.40 \\ \hline
			DIB (ours) & \textbf{99.86} & \textbf{99.56} & \textbf{99.86} & \textbf{99.00} & \textbf{96.65} & \textbf{99.00} & \textbf{88.95} & \textbf{89.92} & \textbf{88.95} & \textbf{59.06} & $\underline{37.77}$ & \textbf{59.06} \\
			\hline
		\end{tabular}
	\end{center}
	
	\footnotesize
	\begin{center}
		\setlength{\tabcolsep}{2.0mm}
		\begin{tabular}{r|ccc|ccc|ccc|ccc}
			\hline \multicolumn{1}{c|}{ Datasets } & \multicolumn{3}{c|}{ Flicker } & \multicolumn{3}{c|}{ Caltech-3V } & \multicolumn{3}{c|}{ Caltech-4V } & \multicolumn{3}{c}{ Caltech-5V }\\
			\hline \multicolumn{1}{c|}{ Evaluation metrics } & ACC & NMI & PUR & ACC & NMI & PUR & ACC & NMI & PUR & ACC & NMI & PUR \\
			\hline BMVC (TPAMI'2019) & 56.73 & 36.04 & 56.80 & 64.93 & 53.46 & 45.38 & 73.36 & 69.74 & 73.36 & 77.29 & 70.96 & 77.29 \\
			SMKKM (ICCV'2021) & 59.31 & 38.77 & \underline{59.42} & 51.45 & 38.92 & 29.75 & 68.83 & 55.58 & 69.96 & 73.54 & 64.60 & 73.54 \\
			OPLFMVC (ICML'2021) & 51.32 & 31.08 & 51.49 & 57.21 & 42.56 & 36.93 & 74.29 & 53.73 & 74.29 & 79.14 & 65.00 & 79.14 \\
			FastMICE (TKDE'2023) & 54.75 & 35.50 & 54.99 & 64.23 & 57.25 & 50.19 & 73.90 & 66.88 & $\underline{77.00}$ & 78.63 & 72.48 & 78.63 \\ \hline
			MFLVC (CVPR'2022) & 53.98 & 36.97 & 54.60 & 60.77 & 56.52 & 61.71 & 61.75 & 64.18 & 62.00 & 71.37 & 66.57 & 71.37 \\
			CVCL (ICCV'2023) & 57.75 & 38.43 & 57.75 & 66.14 & 58.29 & 66.29 & 72.32 & 63.04 & 74.64 & 81.01 & 70.42 & 81.01 \\
			AONGR (INS'2023) & 54.34 & 38.52 & 54.64 & 53.86 & 52.09 & 57.57 & 59.93 & 59.29 & 64.50 & 65.71 & 61.14 & 67.93 \\
			GCFAgg (CVPR'2023) & 31.18 & 19.58 & 37.85 & 59.43 & 55.72 & 59.71 & 48.86 & 48.40 & 53.29 & 50.93 & 55.05 & 54.64 \\
			SDMVC (TKDE'2023) & 39.30 & 19.04 & 39.31 & 67.66 & 57.72 & 50.52 & $\underline{74.79}$ & 68.03 & $\textbf{77.79}$ & $\underline{83.84}$ & $\underline{78.08}$ & $\underline{83.84}$ \\ \hline
			DMIM (AAAI'2021) & 57.44 & 34.83 & 57.68 & 70.71 & 58.67 & 69.34 & 73.07 & $\underline{70.94}$ & 74.09 & 79.28 & 63.09 & 80.16 \\
			DMIB (TCYB'2022) & 55.74 & 27.88 & 56.29 & 71.21 & 59.23 & 70.52 & 72.78 & 66.82 & 71.95 & 82.28 & 68.02 & 81.97 \\
			ConGMC (TMM'2023) & $\underline{60.05}$ & 37.92 & 57.06 & $\underline{73.37}$ & $\underline{64.80}$ & $\underline{74.54}$ & 73.78 & 69.14 & 75.25 & 83.78 & 73.55 & 82.70 \\
			DCIB (TNNLS'2023) & 58.78 & $\underline{38.88}$ & 58.78 & 58.40 & 51.50 & 58.40 & 69.70 & 60.90 & 69.72 & 75.15 & 68.74 & 75.63 \\ \hline
			DIB (ours) & $\textbf{60.40}$ & $\textbf{40.23}$ & $\textbf{60.40}$ & $\textbf{74.71}$ & $\textbf{68.46}$ & $\textbf{75.71}$ & \textbf{75.64} & \textbf{71.51} & 76.64 & $\textbf{84.79}$ & $\textbf{78.34}$ & $\textbf{84.79}$ \\
			\hline
		\end{tabular}
	\end{center}
\end{table*}

\subsection{Implementation}
The encoders in the DIB are composed of a four-layer fully connected network. The feature MLP consists of two linear layers. The label MLP is constructed by a linear layer and a Softmax layer. The DIB is implemented through Pytorch's public toolbox. We use the Adam optimizer for optimization and set the learning rate to $3 \times 10^{-4}$. The parameters $\alpha$ and $\beta$ in the loss function (Eq.~\ref{overall}) are fixed, i.e., ${\alpha = 0.01}$ and ${\beta = 0.01}$, for all used datasets. We implement the experiments on a Windows 11 platform and an NVIDIA 4060Ti GPU with 16G of RAM.

\subsection{Baselines}
We compare the DIB with the three types of state-of-the-art methods. 1) Traditional MVC: binary MVC (BMVC)~\cite{DBLP:journals/pami/ZhangLSSS19}, simple multi-kernel $k$-means (SMKKM)~\cite{DBLP:conf/iccv/LiuZ0TW0Z21}, one-pass late fusion MVC (OPLFMVC)~\cite{DBLP:conf/icml/Liu0LWZTT0Z21} and fast MVC via ensembles (FastMICE)~\cite{DBLP:journals/tkde/HuangWL23}).
2) Deep MVC: multi-feature multi-level clustering (MFLVC)~\cite{DBLP:conf/cvpr/XuT0P0022},  cross-view contrastive
learning (CVCL)~\cite{chen2023deep}, auto-weighted orthogonal and nonnegative graph reconstruction (AONGR)~\cite{DBLP:journals/isci/ZhaoYN23}, global and cross-view feature
aggregation (GCFAgg)~\cite{DBLP:conf/cvpr/YanZLT0LL23} and self-discriminative MVC (SDMVC)~\cite{DBLP:journals/tkde/XuRTYPYPYH23}). 3) IB-based deep MVC: deep mutual information maximin (DMIM)~\cite{DBLP:conf/aaai/MaoYGY21}, deep multi-view information bottleneck (DMIB) \cite{DBLP:conf/iclr/Federici0FKA20}, consistency-guided multi-modal clustering (ConGMC)~\cite{10313078}  and deep correlated information bottleneck (DCIB)~\cite{yan2023cross}).
The parameter settings of the baselines in our experiments are fine-tuned for each dataset according to the descriptions in the corresponding papers. 

\subsection{Evaluation Metrics}
We use three widely-used clustering metrics including clustering accuracy (ACC)~\cite{DBLP:journals/tcyb/LiuLXGWW13}, normalised mutual information (NMI)~\cite{DBLP:journals/jmlr/StrehlG02} and purity (PUR)~\cite{DBLP:journals/isci/Aliguliyev09} to quantify the clustering results. The reported results of the used algorithms are the average values by running 10 times.

\subsection{Performance Analysis}
We evaluate the effectiveness of the DIB with traditional, deep and IB-based MVC baselines. The comparison results are shown in Table ~\ref{comparison}. From this table, we obtain the following observations: 1) The DIB outperforms the traditional MVC, which demonstrates its superior ability of representation learning of high-dimensional space compared with traditional shallow MVC baselines. 2) Compared with several latest SOTA deep MVC baselines, DIB also achieves better performance. For example, the DIB obtains 3.04\%, 12.01\%, 8.38\%, 1.46\% and 9.49\% improvements compared with MFLVC, CVCL, AONGR, GCFAgg and SDMVC on the ESP dataset in terms of ACC metric. This is mainly because that the deterministic compression in DIB can learn a discriminative and compact representation for each view. 3) Compared with IB-based deep MVC baselines, DIB achieves the best results on all evaluation metrics in the used datasets. This is mainly because that the DIB can directly measure the information about
feature representations from the source data rather than building a neural estimator to approximate the lower bound of mutual information. Besides, we conduct a significance test \cite{DBLP:journals/pami/SunCDLDY22} to verify that the performance of the DIB is statistically better than the representative baselines (see Supplementary B for more details).


\begin{table*}[t]
	\caption{Ablation experiments on loss components.}\label{ablation}
	\begin{center}
		\footnotesize
		\begin{tabular}{c|cccc|ccc|ccc|ccc}
			\hline \multicolumn{1}{c|}{ } & \multicolumn{4}{c|}{ Loss Components } & \multicolumn{3}{c|}{ MNIST-USPS } & \multicolumn{3}{c|}{ BDGP } & \multicolumn{3}{c}{ ESP } \\
			\hline & $\mathcal{L}_{clu}$ & $\mathcal{L}_{fea}$ & $\mathcal{L}_{comp}$ & $\mathcal{L}_{joint}$ & ACC & NMI & PUR & ACC & NMI & PUR & ACC & NMI & PUR \\
			\textbf{A)} & $\checkmark$ & & &                           & 85.72  & 90.24 & 85.72 & 82.58 & 87.13 & 82.58 & 48.01 & 31.96 & 49.43 \\
			\textbf{B)} & $\checkmark$ & $\checkmark$ & &              & 92.96  & 93.48 & 52.96 & 91.00 & 90.23 & 92.00 & 49.95 & 35.38 & 53.18 \\
			\textbf{C)} & $\checkmark$ & $\checkmark$ & & $\checkmark$ & 96.98  & 97.67 & 96.98 & 95.24 & 91.33 & 92.24 & 49.83 & 35.12 & 52.80 \\
			\textbf{D)} & $\checkmark$ & $\checkmark$ & $\checkmark$ & & 99.84  & 99.50 & 99.84 & 98.72 & 95.69 & 98.72 & 53.30 & 37.40 & 54.86 \\
			\textbf{E)} & $\checkmark$ & $\checkmark$ & $\checkmark$ & $\checkmark$ & \textbf{99.86} & \textbf{99.56} & \textbf{99.86} & \textbf{99.00} & \textbf{96.65} & \textbf{99.00} & \textbf{59.06} & \textbf{37.77} & \textbf{59.06} \\
			\hline
		\end{tabular}
	\end{center}
\end{table*}

\subsection{Ablation Study}

To verify the effectiveness of each components, we consider the following five scenarios: A) Retain $\mathcal{L}_{clu}$. In this case, we only use the cluster consistency. B) Retain $\mathcal{L}_{clu}$ and $\mathcal{L}_{fea}$. We use the cluster consistency and feature consistency simultaneously. C) Retain $\mathcal{L}_{clu}$, $\mathcal{L}_{fea}$, and $\mathcal{L}_{joint}$. In this scenario, we add joint consistency between cluster consistency and feature consistency so that the consistency learned from high-level features can further optimize the cluster structure. D) Retain $\mathcal{L}_{clu}$, $\mathcal{L}_{fea}$ and $\mathcal{L}_{comp}$. We add the deterministic compression component to scenario B). E) Retain $\mathcal{L}_{clu}$, $\mathcal{L}_{fea}$, $\mathcal{L}_{comp}$ and $\mathcal{L}_{joint}$. In this case, we perform MVC with the overall loss function of the DIB.

\begin{figure}[t]
	\center
	\small
	\includegraphics[width=0.4\linewidth]{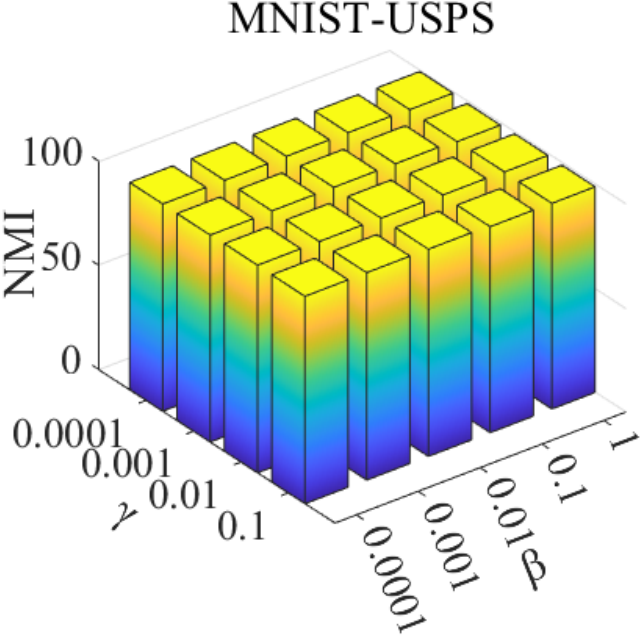}
	\includegraphics[width=0.4\linewidth]{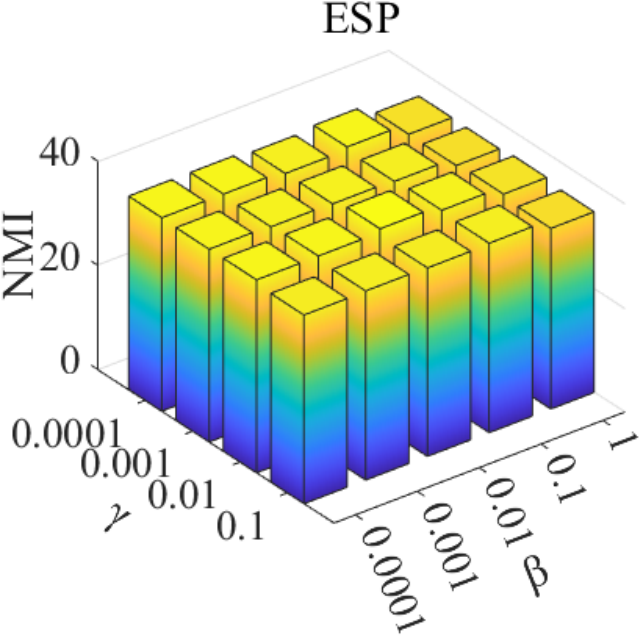}
	\caption{Parameter $\gamma$ and $\beta$ sensitivity experiment results.}\label{sanwei}
\end{figure}
From Table ~\ref{ablation}, we can obtain the following observations. According to A) and B), we can find that learning the consistency from high-level features can improve the clustering performance. This shows that it makes sense to map low-level features to high-level features to learn the common semantics. 
According to B) and D), we can observe that the deterministic compression can significantly improve the clustering performance. According to D) and E), we can find that establishing triplet consistency with the deterministic compression component can further enhance the clustering performance. This indicates that the learned deterministic and compact representation can facilitate the consistency discovery across multiple views. According to C) and D) and E), we can get that removing deterministic compression or discarding triplet consistency will make the final clustering performance degraded, which suggests that the two main parts of our model are highly integrated and refined.

Besides, we replace the differentiable MI measurement in DIB with the variational approximation. The correspondingly experimental results further verify the effectiveness of the proposed MI measurement without variational approximation (see Supplementary B for details).

\begin{figure}[t]
	\center
	\small
	\includegraphics[width=0.4\linewidth]{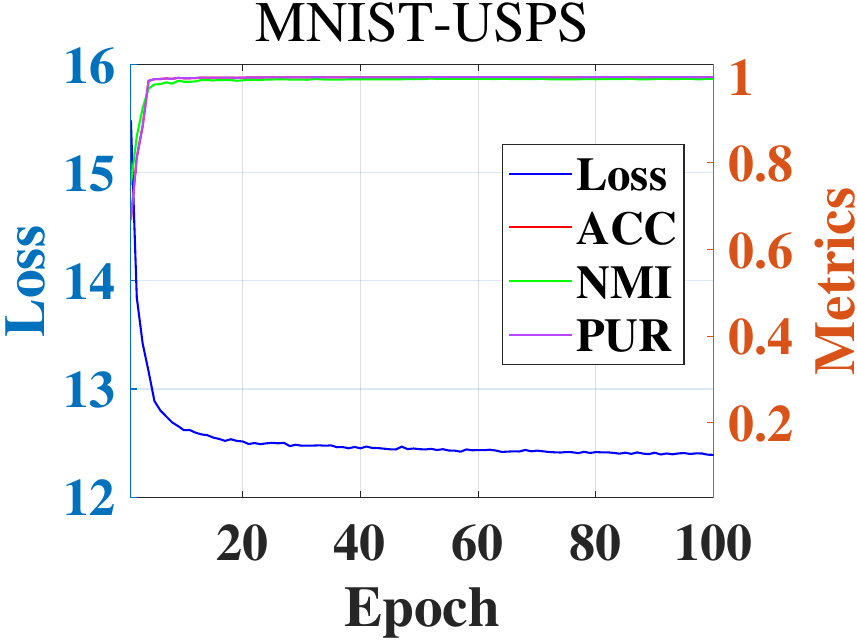}
	\includegraphics[width=0.4\linewidth]{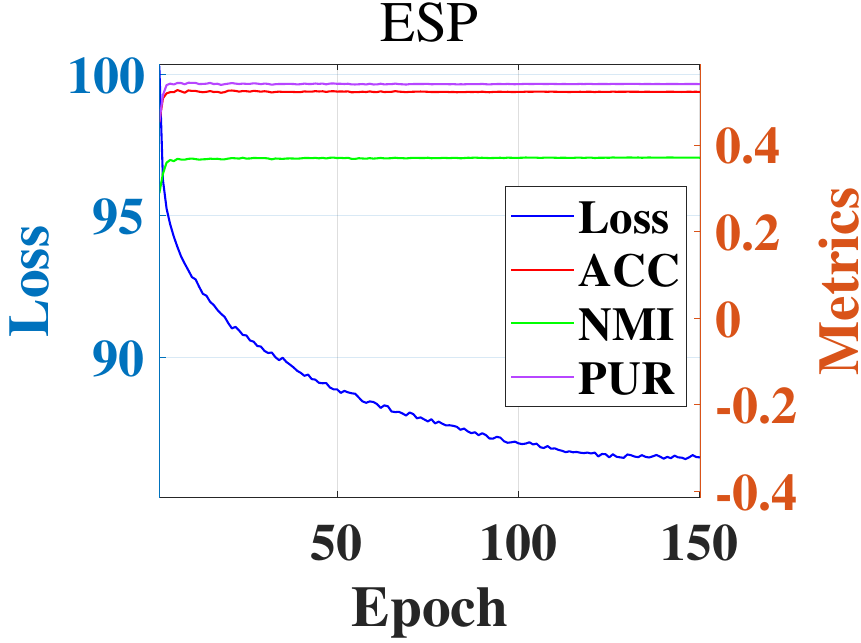}
	\caption{Convergence curves on MNIST-USPS and ESP.}\label{zhexian}
\end{figure}

\subsection{Parameter Sensitivity Analysis}
In this subsection, we evaluate the impact of the trade-off parameters ${\gamma}$ and ${\beta}$ on the clustering performance of the DIB on two representative datasets (MNIST-USPS and ESP). Specifically, we investigate the values of $\gamma$ and $\beta$ in the range of  ${\{10^{-4},10^{-3}, 10^{-2},10^{-1}\}}$ and ${\{10^{-4},10^{-3},10^{-2},10^{-1},10^0\}}$, respectively. From Figure~\ref{sanwei}, we can observe that the DIB method can achieve stable clustering performance with different combinations of parameters in MNIST-USPS and ESP datasets. This suggests that our model is insensitive to the choice of ${\gamma}$ and ${\beta}$. Based on the experimental results, we set ${\gamma = 0.01}$ and ${\beta = 0.01}$ for all used datasets in this study.


\subsection{Convergence Analysis}

\begin{figure}[t]
	\center
	\small
	\includegraphics[width=0.77\linewidth]{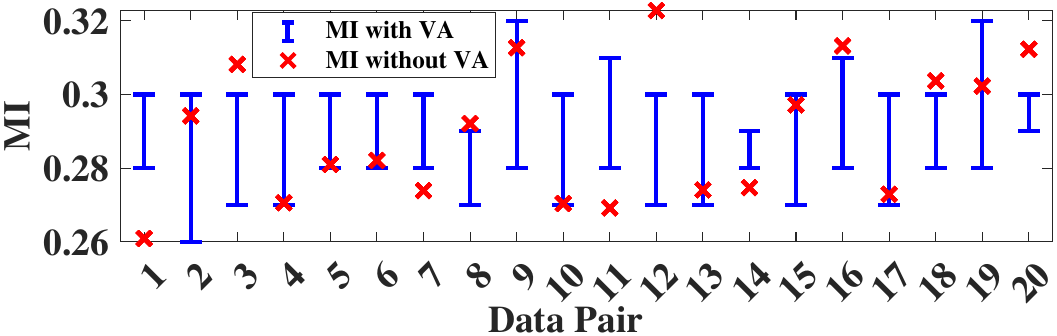}
	\caption{ Mutual information with/without VA on the data pairs sampled from MNIST-USPS.}\label{uncertainty}
\end{figure}

Figure~\ref{zhexian} reports the values of the loss function and the evaluation metrics of the DIB algorithm as the iterations increase. It can be observed that the loss function and evaluation metrics of the DIB approach to a fixed point with the epochs increase. This phenomenon shows that our DIB algorithm enjoys a good convergence property.
%

\subsection{MI Measurement Evaluation}
To verify the effectiveness of our MI without variational approximation (VA), we compare the MI with VA with our MI without VA by sampling 20 data pairs from MNIST-USPS dataset randomly. As shown in Figure~\ref{uncertainty}, we observe that MI with VA fluctuates within a range and our MI without VA is a definite value.

\section{Conclusions and Future Work}
This paper investigates a novel differentiable information bottleneck method, which provides a deterministic and analytical MVC solution by essentially fitting the mutual information without the necessity of variational approximation. It is a valuable attempt to directly measure the information about feature representations from the data. In future, it is interesting to use the mutual information without variation approximation to conduct layer-by-layer training for a DNN.
{
    \small
    \bibliographystyle{ieeenat_fullname}
    \bibliography{egbib}
}


\end{document}